\DeclareMathOperator{\rk}{rk}
\newcommand{\Adfas}{\Adfa{}s\xspace}
\newcommand{\Adfa}{\textsc{ADFA}\xspace}
\newcommand{\Madfa}{\textsc{MADFA}\xspace}
\newcommand{\Idfa}{\textsc{IDFA}\xspace}
\newcommand{\Icdfa}{\textsc{ICDFA}\xspace}
\newcommand{\Dfa}{\textsc{DFA}}
\newcommand{\bdisplay}{\begin{description}\footnotesize\item[]}
\newcommand{\edisplay}{\end{description}}
\newcommand{\bquot}[1]{\begin{quotation}\small\noindent
  \textbf{#1}\hspace{\labelsep}\ignorespaces}
\newcommand{\equot}{\unskip\end{quotation}}
\begin{document}%

\title{Exact Generation of  Acyclic Deterministic Finite Automata
\thanks{This work was partially funded by Funda\c{c}\~ao para a
  Ci\^encia e Tecnologia (FCT) and Program POSI, and by project ASA
  (PTDC/MAT/65481/2006). }
\thanks{This paper was presented at the 10th Workshop on
  Descriptional Complexity of Formal Systems, DCFS'08}}
\author{
Marco Almeida\thanks{Marco Almeida is funded by FCT grant
  SFRH/BD/27726/2006.}
\hspace{0.3cm}  Nelma Moreira \\
 \hspace{0.3cm} Rog\'erio Reis\\
{\tt \{mfa,nam,rvr\}@ncc.up.pt}}
\institute{DCC-FC \ \& LIACC,  Universidade do Porto \\
  R. do Campo Alegre 1021/1055, 4169-007 Porto, Portugal }

\date{}
\maketitle

\lstdefinelanguage{algo}{
  morekeywords={for,if,then,do,while,return,break,else,continue,and,nil,to,
downto,print,def,output},
  morecomment=[l]{\%}}

\lstset{language=algo,
  aboveskip=20pt, belowskip=20pt, 
  mathescape=true , basicstyle=\small,
  xleftmargin=20pt,
  xrightmargin=20pt,escapechar=@}

\begin{abstract}

  We give a canonical representation for trim acyclic deterministic
  finite automata (\Adfa{}) with $n$ states over an alphabet of $k$
  symbols. Using this normal form, we present a backtracking algorithm
  for the exact generation of \Adfa{}s. This algorithm is a non
  trivial adaptation of the algorithm for the exact generation of
  minimal acyclic deterministic finite automata (\Madfa{}), presented
  by Almeida \textit{et al.}.
\end{abstract}

\section{Introduction}
\label{sec:intro}
Recently, Liskovets~\cite{liskovets06:_exact} obtained a formula for
the enumeration of unlabelled (non-isomorphic) initially connected
acyclic deterministic finite automata with $n$ states over an alphabet
of $k$ symbols.
Callan~\cite{callan07:_stirl_cycle_number_count_unlal} presented a
canonical form for those automata and showed that a certain
determinant of Stirling cycle numbers can also count them. That
canonical form is obtained by observing that if we mark the visited
states, starting with the initial state, it is always possible
to find a state whose only incident states are already marked.  This
induces a unique labelling of states, but it is not clear how these
representations can be used in automata generation.  Almeida \textit{et
al.}~\cite{almeida07:_exact} obtained a canonical form for
(non-isomorphic) minimal acyclic deterministic finite automata
(\Madfa{}) and an exact generation algorithm. Unfortunely the
canonical form did not provide directly an enumeration formula for
\Madfa{}s. One of the applications of such an enumeration formula would
be in the development of uniform random generators of automata,
useful for the average case analysis of algorithms for that
class of automata.
The enumeration of different kinds of finite automata was considered
by several authors since late 1950s. For more complete surveys we
refer the reader to Domaratzki~{\textit et al.}~\cite{domaratzki02} and to
Domaratzki~\cite{domaratzki06:_enumer_formal_languag}.
Liskovets~\cite{liskovets69} and Robinson~\cite{robinson85:_count}
counted non-isomorphic initially connected deterministic finite
automata (\Icdfa).  More recently, several authors examined related
problems. Domaratzki {\textit et al.}~\cite{domaratzki02} studied the
(exact and asymptotic) enumeration of distinct languages accepted by
finite automata with $n$ states. Nicaud~\cite{nicaud00}, Champarnaud
and Parantho\"en~\cite{champarnaud:_random_gener_dfas} presented a
method for randomly generating complete \Icdfa's. Bassino and
Nicaud~\cite{bassino07:_theor_comput_scien} showed that the number of
complete \Icdfa's is $\Theta(n2^nS(kn,n))$, where $S(kn,n)$ is a
Stirling number of the second kind.  Based on a canonical string
representation for \Icdfa's, Almeida {\textit et
  al.}~\cite{almeida07:_enumer_gener_strin_autom_repres} obtained a
new formula for the number of non-isomorphic \Icdfa's, and provided
exact and uniform random generators for them.

In this paper, we give a canonical representation for trim (complete)
acyclic deterministic finite automata (\Adfa{}). By trim we mean that
from the initial state all other states are reachable (initially
connected)  and that from all states (but the dead) at least one final
state is reachable (useful).

This canonical form extends the one for \Madfa{}s 
by taking into consideration equivalent states. The backtracking algorithm for the
exact generation of \Adfa{}s is a non-trivial adaptation of the one for
\Madfa{}s, because we must properly consider the equivalence classes
but still avoid the multiple generation of isomorphic automata.
It is easy to order equivalent states according to the
words that reach them (i.e., their left languages) but to obtain a
feasible generator algorithm  we must find an
ordering such that:
\begin{itemize}
\item the canonical representation for \Adfas is a natural
  extension of the canonical representation for \Madfa{}s (i.e.,
  preserves its characteristics);
\item it allows the detection of an ill-formed automata representation as soon
  as possible (as the algorithm proceeds  backwards, towards the initial
  state);
\item it allows the  exact generation algorithm to output the automata canonical
  representations in increasing order.
\end{itemize}
\Adfa{}s, as defined here, are a proper subset of the class of acyclic
automata enumerated by Liskovets and Callan because we only consider
automata where all the states are useful. Once more, their formulae
can not be used directly, but in this paper we hope to contribute to
a better understanding of the internal structure of~\Adfa{}s. 

The paper is organized as follows. In the next section some basic
concepts and notation are introduced. In Section~\ref{sec:adfa} we
review some concepts about acyclic deterministic finite automata and
the canonical form for \Madfa{}s.  In Section~\ref{sec:normal} we show
how to extend that canonical form to \Adfa{}s. In
Section~\ref{sec:exact} we describe an algorithm to efficiently
generate equivalent states as an extension to the exact generator for
\Madfa{}s.  Some experimental results are also summarized in that
section.  In Section~\ref{sec:count} we consider \Adfa{}s enumeration
formulas for small values of $n$ and $k$. Finally
Section~\ref{sec:con} concludes.

\section{Basic concepts}
\label{sec:basic}
We review some basic concepts we need in this paper.
For more details we refer the reader~to~Hopcroft {\itshape et
al.}~\cite{hopcroft00_c:_introd_autom_theor_languag_comput}, 
Yu~\cite{yu97:_handb_formal_languag} or Lothaire~\cite{lothaire05:_algor_words}.

Let $[n,m]$ denote the set $\{i\in \mathbb{Z}\mid n\leq i \leq m\}$.
In a similar way, we consider the variants $]n,m]$, $[n,m[$ and
$]n,m[$.  Whenever we have a finite ordered set $A$, and a function
$f$ on $A$, the expression  $(f(a))_{a\in A}$ denote the values of $f$
for increasing values of $A$.

Let  $\Sigma$  be an {\itshape alphabet} and $\Sigma^\star$ be the set of
all words over $\Sigma$. The empty word
is denoted by $\varepsilon$.  The length of a word
$x=\sigma_1\sigma_2\cdots\sigma_n$, denoted by $|x|$, is $n$. 
 A language $L$
is a subset of $\Sigma^\star$. A language is finite if its cardinality
is finite.

The alphabet $\Sigma$ can be equipped with a total order $<$ that
allows the definition of total orders on $\Sigma^\star$. A \textit{lexicographical}
order on $\Sigma^\star$  is defined as follows. 
Let $x=x_1\ldots x_m,\, y=y_1\ldots y_n \in \Sigma^\star$.  Then $x<y$ if:
\begin{enumerate}
\item there exists an integer $j \in[1,\min\{m,n\}]$ such that
  $(\forall i\in [1,j[)\, x_i=y_i$ and $x_j<y_j$;
\item $m<n$ and $(\forall i\in [1,m])\,x_i=y_i$.
\end{enumerate}

A {\itshape deterministic
  finite automaton} (\Dfa{}) ${\cal A}$ is a tuple
$(S,\Sigma,\delta,s_0,F)$ where $S$ is a finite set of states,
$\Sigma$ is the alphabet, $\delta: S \times \Sigma \rightarrow S$ is
the transition function, $s_0$ the initial state and $F\subseteq S$
the set of final states. 

We assume that the transition function is total, so we consider
  only {\itshape complete} \Dfa{}s. 
  The transition function $\delta$ is
  inductively extended to $\Sigma^\star$, by $(\forall s\in
  S)\,\delta(s,\varepsilon)=s$ and
  $\delta(s,x\sigma)=\delta(\delta(s,x),\sigma)$.

A \Dfa{} is {\itshape initially connected} (or {\itshape accessible})
(\Icdfa{}) if for each state $s\in S$ there exists a word~$x\in
\Sigma^{\star}$ such that $\delta(s_0,x)=s$.  A \Dfa{} is {\itshape
  trim} if it is an \Icdfa{} and every state is {\itshape useful},
i.e., $(\forall s\in S)(\exists x\in \Sigma^\star)\,\delta(s,x)\in F$.

Two \Dfa{}s $(S,\Sigma,\delta,s_0,F)$ and
$(S',\Sigma',\delta',s_0',F')$ are called {\itshape isomorphic} if
$|\Sigma|=|\Sigma'|=k$, there exist bijections
$\Pi_1:\Sigma\rightarrow[0,k-1]$, $\Pi_2:\Sigma'\rightarrow[0,k-1]$
and a bijection $\iota:S \rightarrow S'$ such that $\iota(s_0)=s_0'$,
$\iota(F)=F'$, and
for all $\sigma\in \Sigma$ and $s\in S$,
$\iota(\delta(s,\sigma))=\delta'(\iota(s),\Pi_2^{-1}(\Pi_1(\sigma)))$.

The {\itshape language} accepted by a \Dfa{} ${\cal A}$ is ${\cal L}({\cal
  A})=\{x\in \Sigma^\star\mid \delta(s_0,x)\in F\}$. For a state $s\in
S$ we denote
\begin{align*}
  {\cal L}_L({\cal  A},s)&=\{x\in\Sigma^\star \mid  \delta(s_0,x)=s\},\\
  {\cal L}_R({\cal  A},s)&=\{x\in\Sigma^\star \mid \delta(s,x) \in F\},
\end{align*}
\noindent the \textit{left} and the \textit{right language} of state
$s$, respectively. We omit ${\cal  A}$ whenever no confusion
arises. All states of a \Dfa{} have distinct \textit{left} languages.

 Two \Dfa{}s are
{\itshape equivalent} if they accept the same language. 
We say that two states $s$ and $s'$ are {\itshape equivalent},
$s\sim s'$, if
and only if ${\cal L}_R({\cal  A},s)={\cal L}_R({\cal  A},s')$.
A \Dfa{} is {\itshape minimal} if it has no equivalent states and
it is initially-connected.  Minimal \Dfa{}s are unique up to isomorphism.

\section{Acyclic finite automata}
\label{sec:adfa}
An {\itshape acyclic deterministic finite automaton} is a \Dfa{}
${\cal A}=(S \cup\{\Omega\},\Sigma,\delta,s_0,F)$ with $F\subseteq S$
and $s_0\not=\Omega$ such that $(\forall \sigma \in \Sigma)\,
\delta(\Omega,\sigma) = \Omega$ and $(\forall x \in
\Sigma^\star\setminus \{\epsilon\})(\forall s \in S)\,\delta(s,x)\not= s$. The state
$\Omega$ is called the {\itshape dead state}, and is the only
{\itshape cyclic} state of ${\cal A}$. The {\itshape size of} $\mathcal{A}$ is
$|S|$.  We are going to consider only trim complete acyclic
deterministic finite automata (\Adfa{}), where
all states but $\Omega$ are useful. It is obvious that the language of
an \Adfa{} is finite.

A state $s\in S$ is called {\itshape pre-dead} if $(\forall \sigma\in
\Sigma)\,\delta(s,\sigma)=\Omega$. Every \Adfa{} has at least a
pre-dead state and all pre-dead states are final.

Given an \Adfa{}, ${\cal A}=(S\cup \{\Omega\},\Sigma,\delta,s_0,F)$,
the {\itshape rank} of a state $s\in S$, denoted $\rk(s)$, is the
length of the longest word $x\in \Sigma^\star$ such that
$\delta(s,x)\in F$ (i.e., $x \in {\cal L}_R({\cal A},s)$). The {\itshape
  rank}\footnote{Also called the {\itshape diameter} of ${\cal A}$.}
of an $\Adfa$~${\cal A}$ , $\rk({\cal A})$, is $\max\{\rk(s)\mid s\in
S\}$.  Trivially, we have that $\rk(s_0)=\rk({\cal A})$ and
$\rk(s)=0$, for all pre-dead states $s$.

%

For every state $s\in S$, with $\rk(s)>0$ there exists a transition to
a state with rank  immediately lower than  $s$'s.
\begin{lemma}\label{lem:rkmu}
  Let   ${\cal A}=(S\cup \{\Omega\},\Sigma,\delta,s_0,F)$ be an
  \Adfa, then 
$$(\forall s \in S) (\rk(s)\not=0 \Rightarrow 
(\exists \sigma \in \Sigma)\; \rk(\delta(s,\sigma))=\rk(s)-1).$$
\end{lemma}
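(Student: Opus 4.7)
The plan is to prove the lemma by exhibiting, for a given state $s$ of positive rank, a particular symbol $\sigma$ that witnesses the claim, obtained from a word realizing the rank of $s$.

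First I would fix $s \in S$ with $\rk(s) = n > 0$ and invoke the definition of rank to get a word $x \in \Sigma^\star$ of length exactly $n$ such that $\delta(s,x) \in F$. Since $n \geq 1$, we may write $x = \sigma y$ with $\sigma \in \Sigma$ and $y \in \Sigma^\star$ of length $n-1$. Let $s' = \delta(s,\sigma)$. I would claim that $\rk(s') = n - 1$, and take this $\sigma$ as the required symbol.

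The two inequalities to establish are straightforward. For $\rk(s') \geq n-1$, observe that by the inductive extension of $\delta$ we have $\delta(s', y) = \delta(s, \sigma y) = \delta(s, x) \in F$, and $|y| = n-1$, so the definition of rank as a maximum gives $\rk(s') \geq n-1$. For $\rk(s') \leq n-1$, suppose toward contradiction that some word $z$ with $|z| \geq n$ satisfies $\delta(s', z) \in F$. Then $\delta(s, \sigma z) \in F$ with $|\sigma z| \geq n+1$, contradicting $\rk(s) = n$. Combining both gives $\rk(s') = n - 1 = \rk(s) - 1$.

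I do not expect a genuine obstacle here: the lemma is essentially unpacking the definition of rank as the length of a longest accepting continuation, with acyclicity playing no explicit role beyond ensuring that the rank is a well-defined finite integer (without acyclicity the supremum in the definition of rank could be infinite, which would make the statement vacuous or meaningless). The only point worth double-checking is that the chosen $\sigma$ does not leave $S$, i.e., that $s' \neq \Omega$; but this is immediate since $\rk(\Omega) = 0$ while we are establishing $\rk(s') = n - 1$ using a word $y$ that reaches a final state, and $\Omega \notin F$ together with $\delta(\Omega,\cdot) = \Omega$ forces $s' \in S$.
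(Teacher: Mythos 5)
Your proof is correct: decomposing a longest accepting word $x=\sigma y$ from $s$ and showing $\rk(\delta(s,\sigma))=|y|=\rk(s)-1$ by the two inequalities is exactly the intended argument, and your check that $\delta(s,\sigma)\neq\Omega$ covers the only subtlety. The paper states this lemma without proof, so your write-up simply supplies the omitted routine verification.
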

Two states $s$ and $s'$ are \textit{mergeable}   if they are both either
final or not final, and the transition function is identical, i.e.,
$$ (s \in F\;\leftrightarrow\; s' \in F) \wedge
(\forall \sigma\in \Sigma)\; \delta(s,\sigma)=\delta(s',\sigma).$$

For instance, in the \Adfa of Figure~\ref{fig:adfa} the states $s_2$
and $s_3$ are mergeable, and $s_7$ and $s_8$ are mergeable too.
\begin{figure}[htb]
  \centering
      \VCDraw{%
        \TinyPicture
        \begin{VCPicture}{(0,-6.3)(16,3.3)}
          \State[s_0]{(0,0)}{S0}
          \Initial{S0}
          \State[s_1]{(5,3)}{S1}
          \State[s_2]{(3.5,0)}{S2}
          \State[s_3]{(2,-3)}{S3}
          \State[s_4]{(9,3)}{S4}
          \State[s_5]{(9,0)}{S5}
          \State[s_6]{(12,0)}{S6}
          \FinalState[s_7]{(16,0)}{S7}
          \FinalState[s_8]{(9,-2)}{S8}
          \State[\Omega]{(9,-4.3)}{Omega}
          \EdgeL{S0}{S1}{a} \EdgeL{S0}{S2}{b} \EdgeL{S0}{S3}{c}
          \EdgeL{S1}{S4}{a,b} \EdgeL{S1}{Omega}{c}
          \EdgeL{S2}{S5}{a} \EdgeL{S2}{Omega}{b,c}
          \EdgeL{S3}{S5}{a} \EdgeL{S3}{Omega}{b,c}
          \EdgeL{S4}{S6}{a} \EdgeL{S4}{S7}{b,c}
          \EdgeL{S5}{S4}{b} \EdgeL{S5}{S6}{a,c}
          \EdgeL{S6}{S7}{a,b} \EdgeL{S6}{S8}{c}
          \EdgeL{S7}{Omega}{a,b,c}
          \EdgeL{S8}{Omega}{a,b,c}
          \LoopS{Omega}{a,b,c}
        \end{VCPicture}
      }
  \caption{An \Adfa.}
  \label{fig:adfa}
\end{figure}

An \Adfa{} can be minimized by merging mergeable states, thus, a
minimal \Adfa{} (\Madfa{}) can be characterized by:

\begin{lemma}[\cite{revuz92:_minim_of_acycl_deter_autom,lothaire05:_algor_words}]\label{lem:madfa}
  An \Adfa{} ${\cal A}=(S\cup \{\Omega\},\Sigma,\delta,s_0,F)$ is
  minimal if and only if it has no mergeable states.
\end{lemma}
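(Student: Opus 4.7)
The plan is to prove both implications by reducing the condition ``no mergeable states'' to the condition ``no two distinct equivalent states,'' which is how minimality is characterized (an \Adfa{} is initially connected and useful by assumption, so it is minimal iff no two states in $S\cup\{\Omega\}$ share the same right language).

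For the easy direction, I would show that mergeable states are equivalent. If $s$ and $s'$ are mergeable, a straightforward induction on word length $|x|$ gives $\delta(s,x)=\delta(s',x)$ for every $x\in\Sigma^\star$ (the base case $x=\varepsilon$ is trivial; the step uses $\delta(s,\sigma)=\delta(s',\sigma)$ from mergeability). Combined with $s\in F\leftrightarrow s'\in F$ this yields ${\cal L}_R(s)={\cal L}_R(s')$, so the presence of mergeable states contradicts minimality.

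For the converse, assume ${\cal A}$ has equivalent distinct states; I would pick such a pair $(s,s')$ in $S$ for which $\rk(s)$ is minimum (note that $s\sim s'$ forces $\rk(s)=\rk(s')$ since rank depends only on the right language, and that equivalent useful states cannot be paired with $\Omega$ because ${\cal L}_R(\Omega)=\emptyset$ while useful states have nonempty right languages). First, $s\in F\leftrightarrow s'\in F$ is immediate from $\varepsilon\in{\cal L}_R(s)\leftrightarrow\varepsilon\in{\cal L}_R(s')$. Next, for each $\sigma\in\Sigma$ one checks that ${\cal L}_R(\delta(s,\sigma))={\cal L}_R(\delta(s',\sigma))$ using $w\in{\cal L}_R(\delta(s,\sigma))\Leftrightarrow \sigma w\in{\cal L}_R(s)={\cal L}_R(s')\Leftrightarrow w\in{\cal L}_R(\delta(s',\sigma))$, so the $\sigma$-successors are equivalent. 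By the dead-state remark above, either both $\sigma$-successors equal $\Omega$ (hence agree) or both lie in $S$; in the latter case their common rank is at most $\rk(s)-1$ by Lemma~\ref{lem:rkmu} applied coordinate-wise (the longest word witnessing $\rk(\delta(s,\sigma))$ prefixed by $\sigma$ lies in ${\cal L}_R(s)$), so by minimality of the chosen rank these two successors must coincide. Thus $\delta(s,\sigma)=\delta(s',\sigma)$ for every $\sigma$, showing $s$ and $s'$ are mergeable, the desired contradiction.

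The main obstacle is the bookkeeping around the dead state: one must be careful that ``equivalent'' pairs do not accidentally mix $\Omega$ with a useful state, and that the rank induction does not collapse when some transitions land in $\Omega$. Both points are handled cleanly by the observations that ${\cal L}_R(\Omega)=\emptyset$ distinguishes $\Omega$ from every useful state, and that the minimal-rank argument only needs to recurse on successors that themselves lie in $S$.
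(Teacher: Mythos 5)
Your proof is correct. The paper states this lemma without proof (it is imported from Revuz and Lothaire), and your argument---mergeable implies equivalent by induction on word length, and conversely a descent to an equivalent pair of minimal rank whose successors must coincide---is the standard one; it is essentially the same idea the paper itself uses later in the proof of Lemma~\ref{lem:equiv}. Two cosmetic points: the claimed base case of your induction, $\delta(s,\varepsilon)=\delta(s',\varepsilon)$, is literally false for distinct $s,s'$ (the $\varepsilon$ case of right-language equality is instead covered by the finality clause, which you do invoke); and the bound $\rk(\delta(s,\sigma))<\rk(s)$ follows from your parenthetical prefix argument, not from Lemma~\ref{lem:rkmu}, which only asserts the existence of one successor of rank exactly $\rk(s)-1$.
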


It is a direct consequence of Lemma~\ref{lem:madfa} that every
\Madfa{} has  a unique pre-dead state, $\pi\in S$, and that mergeable
states have the same rank. This implies that to minimize an \Adfa{} 
it is only necessary to merge states by increasing rank order (see
Revuz~\cite{revuz92:_minim_of_acycl_deter_autom} or Lothaire~\cite{lothaire05:_algor_words}).

\subsection{A normal form for \Madfa{}s}
\label{sec:nfmadfas}
Based upon the above considerations, Almeida \textit{et
  al.}~\cite{almeida07:_exact} presented a canonical representation
for \Madfa{}s.

Let ${\cal A}=(S\cup \{\Omega\},\Sigma,\delta,s_0,F)$ be a \Madfa{}
with $k=|\Sigma|$ and $n=|S|\geq 2$. Consider a total order
over~$\Sigma$ and let $\Pi:\Sigma\longrightarrow[0,k[$ be the
bijection induced by that order. Let $R_l=\{s\in S\mid \rk(s)=l\}$.
It is possible to obtain a canonical numbering of the states
$\varphi:S\cup\{\Omega\} \rightarrow [0,n]$ proceeding by increasing
rank order and considering an ordering over the ($k+1$)-tuples that
represent the transition function and the finality of each state.
For each state $s\in S$, let its \textit{representation} be a
$(k+1)$-tuple $\Delta(s) = (\varphi(\delta(s,\Pi^{-1}(0))),\ldots,
\varphi(\delta(s,\Pi^{-1}(k-1))),f)$, where the first $k$ values
represent the transitions from state $s$ and the last value, $f$, is
$1$ if $s\in F$ or $0$, otherwise. 
  Let
$\varphi(\Omega)=0$ and $\varphi(\pi)=1$. Thus, the representations of
$\Omega$ and $\pi$ are $(0^k,0)$, and $(0^k,1)$, respectively. We can
continue this process considering the states by increasing rank order,
and in each rank we number the states by lexicographic order over
their transition representations.  It is important to note that
transitions from a given state can only refer to states of a lower
rank, and thus already numbered. The sequence of tuples
$(\Delta(i))_{i\in[0,n]}$ is the \textit{canonical string
  representation} of ${\cal A}$.  Formally, the assignment of state numbers, $\varphi$, can be
described by the following simple algorithm\label{phi}:
\begin{lstlisting}
$\varphi(\Omega)\leftarrow 0$;$\varphi(\pi)\leftarrow 1$;$i \leftarrow 2$
for $l$ in $] 0,\rk({\cal A})]$
  for $s\in R_l$ by lexicographic order over $\Delta(s)$
    $\varphi(s) \leftarrow i$
    $i \leftarrow i + 1$
\end{lstlisting}
In Figure~\ref{fig:ex_MADFA}, we present a \Madfa{}  
($n=7$ and $k=3$), the $\varphi$ function and its canonical
representation.
\begin{figure}[htb]
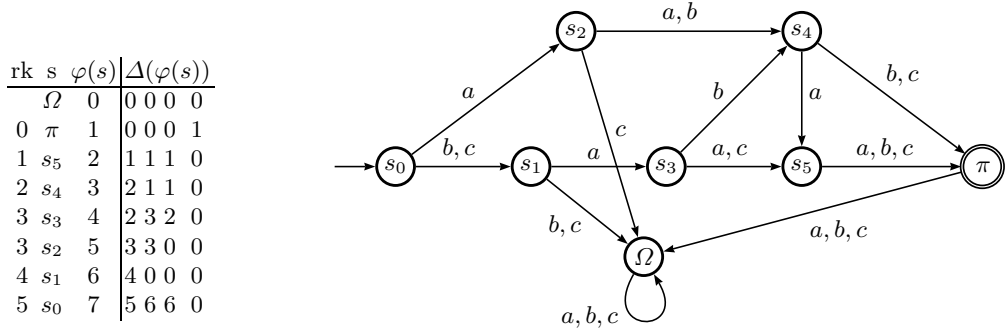

\begin{tabular}[b]{lcl}
  \begin{minipage}{4cm}
  \begin{tabular}[c]{ccc|cccccc}
    $\rk$&s&$\varphi(s)$&\multicolumn{4}{c}{$\Delta(\varphi(s))$}
    
\\\hline
       &$\Omega$&$0$&$0$&$0$&$0$&$0$\\
    $0$&$\pi$&$1$&$0$&$0$&$0$&$1$\\
    $1$&$s_5$&$2$&$1$&$1$&$1$&$0$\\
    $2$&$s_4$&$3$&$2$&$1$&$1$&$0$\\
    $3$&$s_3$&$4$&$2$&$3$&$2$&$0$\\
    $3$&$s_2$&$5$&$3$&$3$&$0$&$0$\\
    $4$&$s_1$&$6$&$4$&$0$&$0$&$0$\\
    $5$&$s_0$&$7$&$5$&$6$&$6$&$0$
  \end{tabular}
  \end{minipage}&&
  \begin{minipage}{5cm}
  \VCDraw{
    \TinyPicture
            \begin{VCPicture}{(-1.5,-5)(16,4)}
              \State[s_0]{(0,0)}{S0}
              \Initial{S0}
              \State[s_1]{(3,0)}{S1}
              \State[s_2]{(4,3)}{S2}
              \State[s_3]{(6,0)}{S3}
              \State[s_4]{(9,3)}{S4}
              \State[s_5]{(9,0)}{S5}
              \FinalState[\pi]{(13,0)}{S6}
              \State[\Omega]{(5.5,-2)}{Omega}
              \LoopS{Omega}{a,b,c}
              \EdgeL{S0}{S1}{b,c} \EdgeL{S0}{S2}{a}
              \EdgeL{S1}{S3}{a} \EdgeR{S1}{Omega}{b,c}
              \EdgeL{S2}{Omega}{c} \EdgeL{S2}{S4}{a,b}
              \EdgeL{S3}{S4}{b} \EdgeL{S3}{S5}{a,c}
              \EdgeL{S4}{S5}{a} \EdgeL{S4}{S6}{b,c}
              \EdgeL{S5}{S6}{a,b,c}
              \EdgeL{S6}{Omega}{a,b,c}
            \end{VCPicture}  }
  \end{minipage}
\end{tabular}
  \caption{An example of a \Madfa{} that can be described by the canonical
    representation 
    $[[0,0,0,0],[0,0,0,1],[1,1,1,0],[2,1,1,0],[2,3,2,0],[3,3,0,0],[4,0,0,0],[5,6,6,0]]$.}\label{fig:ex_MADFA}
\end{figure}

The characterization of these strings and that they constitute a
canonical representation for \Madfa{}s is given by the following
theorem:
\begin{theorem}[{\cite[Thms.3-5.]{almeida07:_exact}}]\label{th:mcr}
  There exists a bijection between non-isomorphic \Madfa{}s with $n$
  states and $k$ symbols and the set of strings $(s_i)_{i \in
    [0,(k+1)(n+1)[}$, with $s_i\in [0,n[$ that satisfy the following
  conditions. Let $(f_i)_{i\in [1,n[}$ be the sequence of the
  positions in $(s_i)_i$ of the first occurrence of each
  $i\in[1,n[$. Let $d\leq n$ and let $(r_l)_{l\in [0,d]}\in[1,n]$ be
  the sequence of the first states of each rank in $(s_i)_i$. Then:
\begin{gather*}
  s_0=\cdots=s_k=\cdots=s_{2k}=0\wedge s_{2k+1}=1\tag{N0}\label{eq:N0}\\
  (\forall i\in[0,n])\,s_{(k+1)i+k}\in\{0,1\}\tag{N1}\label{eq:N1}\\
  r_0=1\wedge r_1=2\wedge r_{d}=n\wedge (\forall l\in[0,d[)\,r_l<r_{l+1}
  \tag{N2}\label{eq:N2}\\
  \begin{split}
  ((\forall i\in [1,n[)&\,s_{f_i}=i\;\wedge \\(\forall j\in[0,n])&(\forall
  m\in[0,k[)\,((k+1)j+m <f_i\;\Rightarrow\; s_{(k+1)j+m}\neq i)) 
  \end{split}\tag{N3}\label{eq:N3}\\
  (\forall l\in[0,d[)(\forall i\in[r_l,r_{l+1}[)\, kr_{l+1}+1\leq
  f_i \tag{N4}\label{eq:N4}\\
  (\forall l\in[0,d])(\forall i\in[r_l,r_{l+1}[)(\exists
  m\in[0,k[)\; s_{(k+1)i+m}\in[r_{l-1},r_l[ \tag{N5}\label{eq:N5}\\
  (\forall l\in[0,d[)(\forall i\in[r_l,r_{l+1}-1[)\,
  (s_{(k+1)i+m})_{m\in[0,k]} < (s_{(k+1)(i+1)+m})_{m\in[0,k]} 
  \tag{N6}\label{eq:N6}
\end{gather*}
\end{theorem}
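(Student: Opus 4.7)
The plan is to establish the bijection via two mutually inverse maps: a forward map $\Phi$ sending a \Madfa{} to the concatenation $\Delta(0)\Delta(1)\cdots\Delta(n)$ produced by the canonical labeling algorithm $\varphi$, and a backward map $\Psi$ sending a valid string to the automaton with state set $[0,n]$, initial state $n$, dead state $0$, transition $\delta(i,\Pi^{-1}(m)) = s_{(k+1)i+m}$, and finality bit $s_{(k+1)i+k}$. First I would verify that $\Phi(\mathcal{A})$ always satisfies (N0)--(N6). Properties (\ref{eq:N0}) and (\ref{eq:N1}) are immediate from the initial assignments and the definition of $\Delta$. For (\ref{eq:N2}), the strict increase $r_l < r_{l+1}$ follows from the algorithm assigning consecutive labels within each rank, while $r_d = n$ uses that the initial state is the unique state of maximum rank in a trim \Adfa{} (a second maximum-rank state would force a path of length $>\rk(\mathcal{A})$ from $s_0$). (\ref{eq:N5}) is a direct translation of Lemma \ref{lem:rkmu} combined with the rank-interval structure, and (\ref{eq:N6}) holds by construction since the inner loop of $\varphi$ traverses each $R_l$ in lex order of $\Delta$.

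The more delicate properties are (\ref{eq:N3}) and (\ref{eq:N4}). For these, I would prove by induction on $i$ the key lemma that the labels $0,1,\ldots,n$ assigned by the algorithm coincide with the order in which they first appear as transition targets when the string is read left-to-right. The induction step combines three facts: a state of rank $l$ is only referenced as a target by states of strictly higher rank (acyclicity and Lemma \ref{lem:rkmu}); the algorithm emits all rank-$<l$ tuples before any rank-$l$ tuple; and within a rank the lex order over $\Delta(s)$ — whose entries only reference already-assigned labels — agrees with the left-to-right first-appearance order obtained when one scans the representations of rank-$(l{+}1)$ states. Granted this lemma, (\ref{eq:N3}) is immediate, and (\ref{eq:N4}) follows because the first appearance of a rank-$l$ label is barred from any tuple of rank $\leq l$, and hence falls in the range of positions determined by $r_{l+1}$.

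For the reverse direction, I would check that given a string satisfying (N0)--(N6), the automaton $\Psi(\sigma)$ is a trim \Madfa{}: (\ref{eq:N2}) and (\ref{eq:N5}) let one define ranks by $\rk(i)=l$ when $i\in[r_l,r_{l+1}[$, which forces transitions from rank $l$ to lower ranks (hence acyclicity) and, inductively, guarantees usefulness via a length-$l$ path to $\pi$; (\ref{eq:N3}) together with (\ref{eq:N2}) gives initial connectedness from the top label $n$; and (\ref{eq:N6}) combined with distinct ranks across blocks rules out mergeable states, so minimality follows from Lemma \ref{lem:madfa}. The composition identities $\Phi\circ\Psi = \mathrm{id}$ and $\Psi\circ\Phi = \mathrm{id}$ (the latter modulo isomorphism) are then bookkeeping: $\Psi(\sigma)$ has its canonical labeling equal to the one already encoded in $\sigma$, and conversely any two isomorphic \Madfa{}s yield the same $\varphi$ up to the unique $\Pi$ induced by the alphabet order. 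The hard part will be the inductive argument for (\ref{eq:N3})--(\ref{eq:N4}), where the interaction between the ``rank-first, lex-second'' labeling and the ``first-appearance'' enumeration must be tracked simultaneously.
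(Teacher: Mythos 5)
First, a framing point: the paper contains no proof of this theorem --- it is imported wholesale by citation from Almeida \emph{et al.} (Thms.~3--5 of the DCFS'07 paper), and the only in-text material is the paragraph glossing what each condition means. So your proposal cannot be matched against an in-paper argument; it has to stand on its own. Its overall architecture (two explicit mutually inverse maps, forward verification of (N0)--(N6), backward reconstruction of a trim automaton whose minimality comes from Lemma~\ref{lem:madfa}, then the composition identities) is the right one.

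The genuine gap is the step you single out as the crux. The ``key lemma'' --- that the labels assigned by $\varphi$ coincide with the left-to-right first-appearance order of transition targets in the string --- is \emph{false} for the \Madfa{} canonical form. That property belongs to the breadth-first canonical form for \Icdfa{}s; here states are ordered by rank first and lexicographically on $\Delta$ second, and a state of rank $l$ need not be referenced by any state of rank $l+1$. Concretely, for $k=2$ take the \Madfa{} with canonical string $[[0,0,0],[0,0,1],[0,1,0],[1,0,0],[3,0,0],[2,4,0]]$ (states $2,3$ of rank $1$, state $4$ of rank $2$, state $5=s_0$ of rank $3$): one checks all of (N0)--(N6) hold, yet $f_3=12<f_2=15$, so label $3$ first appears before label $2$. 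The induction you describe therefore fails at its base claim. Fortunately the lemma is not needed: (N3) does not assert that the $f_i$ are increasing, only that each $i\in[1,n[$ occurs (initial connectedness, plus acyclicity to see that $s_0=n$ is never a target), and (N4) follows from the first of your ``three facts'' alone, namely that a rank-$l$ state is targeted only from blocks with label at least $r_{l+1}$. So the theorem survives, but by deleting your key lemma rather than proving it. The part deserving the care you invested there is instead the backward direction: showing that the blocks cut out by $(r_l)$ really are the ranks of the reconstructed automaton (this needs (N4) and (N5) jointly, as you indicate) and that the reconstructed automaton's own canonical relabelling is the identity, which is where (N6) does its work.
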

 
The condition~\ref{eq:N0} gives the representation of the dead ($\Omega$)
and the pre-dead state ($\pi$). The condition~\ref{eq:N1} states that
the last symbol of each state representation indicates if the state is
final or not.  The condition~\ref{eq:N2} ensures that states are
numbered by increasing rank order. The condition~\ref{eq:N3} defines
the sequence $(f_i)_{i\in[1,n[}$, and ensures that ${\cal A}$ is
initially connected. The condition~\ref{eq:N4} is a direct consequence
of the rank definition, i.e., a state can only refer to a state of a
lower rank. The condition~\ref{eq:N5} states that every state has a
transition to a state with rank immediately lower than its own. The
condition~\ref{eq:N6} ensures that within a rank the state
representations are lexicographically ordered.

\section{A normal form for \Adfa{}s}
\label{sec:normal}

If an \Adfa{} is not minimal, then it has at least two mergeable
states, but not all equivalent states need to be mergeable. The two following lemmas give characterizations of equivalent states in an
\Adfa{} that will be used to obtain a  canonical representation.

\begin{lemma}\label{lem:equivrank}
In an \Adfa every two  equivalent states must belong to the same rank.
\end{lemma}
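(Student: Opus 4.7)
The plan is to unfold the two relevant definitions and observe that the conclusion is essentially immediate. By definition, two states $s,s'\in S$ are equivalent iff ${\cal L}_R({\cal A},s)={\cal L}_R({\cal A},s')$, and the rank of a state is defined purely in terms of its right language: $\rk(s)=\max\{|x|\mid x\in {\cal L}_R({\cal A},s)\}$. So I would open the proof by fixing equivalent states $s,s'\in S$ and writing down these two characterisations.

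Next I would argue that the $\max$ in the definition of rank is well-defined. Since ${\cal A}$ is trim, every state in $S$ is useful, so ${\cal L}_R({\cal A},s)$ is nonempty; since ${\cal A}$ is acyclic, the language accepted from $s$ is finite (this is already noted in the excerpt, which remarks that the language of an \Adfa{} is finite, and the same argument applies to every right language). Therefore the maximum word length exists.

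Now the core step is a one-line identity: from ${\cal L}_R({\cal A},s)={\cal L}_R({\cal A},s')$ and the fact that $\rk$ is a function of this set alone, we get
\[
\rk(s)=\max\{|x|\mid x\in{\cal L}_R({\cal A},s)\}=\max\{|x|\mid x\in{\cal L}_R({\cal A},s')\}=\rk(s'),
\]
so $s$ and $s'$ belong to the same rank class $R_{\rk(s)}$.

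The only mild subtlety — and it is not really an obstacle — is the status of the dead state $\Omega$. Since $\Omega\notin S$, the statement concerns states in $S$, so I would simply observe at the start that no state of $S$ is equivalent to $\Omega$: any $s\in S$ is useful, so ${\cal L}_R({\cal A},s)\neq\emptyset={\cal L}_R({\cal A},\Omega)$. This disposes of the edge case and confirms that restricting to $s,s'\in S$ loses no generality.
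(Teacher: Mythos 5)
Your proof is correct and follows exactly the route the paper intends: the paper dismisses this lemma with ``This follows directly from the definitions,'' and your argument simply makes that explicit by noting that $\rk(s)$ is a function of ${\cal L}_R({\cal A},s)$ alone, which equivalence forces to coincide. The extra care about well-definedness of the maximum (trimness plus acyclicity) and the exclusion of $\Omega$ is a sound, if optional, elaboration.
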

This follows directly from the definitions.

\begin{lemma}\label{lem:equiv}
Let   ${\cal A}=(S\cup \{\Omega\},\Sigma,\delta,s_0,F)$ be an
  \Adfa.  For all $s,s'\in S$, if $s\sim s'$ then there exists $w\in
  \Sigma^\star$, such that $\delta(s,w)$ and $\delta(s',w)$ are mergeable states.
\end{lemma}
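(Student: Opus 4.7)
The plan is to prove this by induction on $\rk(s)$, using Lemma~\ref{lem:equivrank} (which gives $\rk(s)=\rk(s')$) to keep the two states synchronized. The witness word $w$ will be built one symbol at a time, descending through the ranks: if $s$ and $s'$ are already mergeable we take $w=\varepsilon$, otherwise we peel off a symbol $\sigma$ on which the two transitions disagree and recurse on the images.

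For the base case $\rk(s)=0$, I first note that $s$ must be pre-dead. Indeed, since $s$ is useful some word reaches a final state, and rank $0$ forces this word to be $\varepsilon$, so $s\in F$. Moreover, for every $\sigma\in\Sigma$ the target $\delta(s,\sigma)$ cannot be a useful state of $S$ (else composing with a word of length $\rk(\delta(s,\sigma))\geq 0$ would give a word of length $\geq 1$ in $\mathcal{L}_R(s)$), so $\delta(s,\sigma)=\Omega$. The same reasoning applies to $s'$, and since $s\sim s'$ implies $s\in F\Leftrightarrow s'\in F$, the states $s$ and $s'$ are both final and pre-dead, hence mergeable with $w=\varepsilon$.

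For the inductive step, assume the claim for all equivalent pairs of rank strictly less than $\rk(s)=l$. If $s$ and $s'$ are mergeable, take $w=\varepsilon$. Otherwise they share finality (again because $\varepsilon\in\mathcal{L}_R(s)\Leftrightarrow s\in F$), so there is some $\sigma\in\Sigma$ with $t:=\delta(s,\sigma)\neq\delta(s',\sigma)=:t'$. I will argue that $t,t'\in S$ and $t\sim t'$: if, say, $t=\Omega$, then no word of $\mathcal{L}_R(s)=\mathcal{L}_R(s')$ begins with $\sigma$, contradicting usefulness of $t'\in S$; the symmetric argument excludes $t'=\Omega$. Equivalence then follows from $\mathcal{L}_R(t)=\sigma^{-1}\mathcal{L}_R(s)=\sigma^{-1}\mathcal{L}_R(s')=\mathcal{L}_R(t')$. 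Because the automaton is acyclic, prepending $\sigma$ to any word in $\mathcal{L}_R(t)$ yields a word in $\mathcal{L}_R(s)$ that is strictly longer, so $\rk(t)<\rk(s)=l$. Applying the induction hypothesis to the equivalent pair $(t,t')$ yields some $w'$ with $\delta(t,w')$ and $\delta(t',w')$ mergeable, and then $w=\sigma w'$ is the required word.

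The only delicate step is ruling out the case where exactly one of the two differing transitions lands in $\Omega$; everything else is a straightforward induction exploiting acyclicity to guarantee a strict rank drop. Lemma~\ref{lem:rkmu} is not actually needed for the recursion, since we do not require the chosen $\sigma$ to reduce the rank by exactly one — any strict decrease suffices to close the induction.
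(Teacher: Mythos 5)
Your proof is correct and takes essentially the same route as the paper's: repeatedly pick a symbol on which the two equivalent states' transition functions disagree and descend until a mergeable pair is reached. You simply make the paper's informal termination argument (``because $\mathcal{A}$ is acyclic and $|S|$ is finite this process must stop'') precise by inducting on the rank, and you spell out two details the paper leaves implicit, namely the rank-$0$ base case and the exclusion of $\Omega$ as an image of the chosen symbol.
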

\begin{proof}
  If $s\sim s'$ then $(\forall \sigma\in \Sigma),\delta(s,\sigma)\sim
  \delta(s',\sigma)$. Suppose that there exists $\sigma_1\in \Sigma$ such that
  $s_1=\delta(s,\sigma_1) \not=  \delta(s',\sigma_1)=s_1'$. Because
  $s_1\sim s_1'$ we can proceed as before, but because ${\cal A}$ is
  acyclic and $|S|$ is finite this process must stop, and two
  mergeable states, $s_j$ and $s_j'$ for $j\leq |S|$, must be reached. 
The concatenation of the $\sigma_1\ldots\sigma_j$ provides the word $w$.
\end{proof}

In order to have a canonical representation for \Adfa{}s we must
provide an ordering for the equivalent states. Because they must
appear in the same rank we may restrict the state ordering by rank and
consider a proper extension of the function $\varphi$ (assignment of
state numbers), and so a proper extension of the canonical
representation for \Madfa{}s. In particular we take
$\varphi(\Omega)=0$. Because \Adfa{}s are deterministic, we have
\begin{lemma}
  Let   ${\cal A}=(S\cup \{\Omega\},\Sigma,\delta,s_0,F)$ be an
  \Adfa. Then $$(\forall s\not=s'\in S\cup \{\Omega\}), {\cal L}_L(s)\cap
  {\cal L}_L(s')=\emptyset.$$
\end{lemma}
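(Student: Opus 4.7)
The plan is to argue directly from the determinism of the transition function, which was already built into the definition of a \Dfa{} (and hence of an \Adfa{}) in the basic concepts section. Since $\delta$ is a function $S\times\Sigma\to S$ and its inductive extension to $\Sigma^\star$ is again a function, for any word $x\in\Sigma^\star$ the value $\delta(s_0,x)$ is uniquely determined.

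The proof I would write is essentially a one-liner by contradiction. Suppose $s\neq s'$ and that there exists $x\in{\cal L}_L(s)\cap{\cal L}_L(s')$. By the definition of left language, $\delta(s_0,x)=s$ and $\delta(s_0,x)=s'$. But $\delta(\cdot,\cdot)$ extended to $\Sigma^\star$ is single-valued, so $s=s'$, contradicting the hypothesis. Hence ${\cal L}_L(s)\cap{\cal L}_L(s')=\emptyset$. The argument does not use acyclicity or the presence of the dead state $\Omega$ at all; it applies to any \Dfa{}, and indeed the excerpt has already observed in Section~\ref{sec:basic} that ``all states of a \Dfa{} have distinct \textit{left} languages'', of which this lemma is merely the stronger (disjointness) version.

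There is no real obstacle here: the lemma is essentially a restatement of determinism. The only thing to be careful about is covering the case where one of the two states is $\Omega$ itself, but the definition of ${\cal L}_L$ is uniform over $S\cup\{\Omega\}$ (it only depends on $s_0$ and $\delta$), so the same argument handles it without modification. The lemma's role in the paper is to justify using the left languages to order equivalent states unambiguously, so the clean contradiction proof above is all that is needed.
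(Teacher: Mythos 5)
Your proof is correct and matches the paper's reasoning exactly: the paper gives no explicit proof, merely prefacing the lemma with ``Because \Adfa{}s are deterministic, we have\ldots'', and your one-line contradiction from the single-valuedness of the extended $\delta$ is precisely that argument spelled out. Your remarks that acyclicity plays no role and that the case $s=\Omega$ needs no special treatment are both accurate.
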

Any  two different states can be distinguished, if we define any
injective function $\Psi: S \rightarrow O$,
where $O$ must be  a total ordered set. 

For instance, given an order over $\Sigma$  we could have
$\Psi: S \rightarrow \Sigma^\star$ given by
$\Psi(s)=\min\{w\mid w\in {\cal  L}_L(s) \},$ for $s\in S,$
where $\min$ is taken considering the lexicographical order on
$\Sigma^\star$. Then, whenever two
mergeable states $s$ and $s'$ were found, we could take $s<s'$ if and
only if $\Psi(s)< \Psi(s')$ (lexicographically).

In the general case, given an injective function $\Psi$, let
$\prec_\Psi$ be an ordering such that $(\forall s, s'\in 
S), s \prec_\Psi s' $ if:

\begin{enumerate}
\item $\Delta(s)<\Delta(s')$, where $<$ is the lexicographical order;  
\item if $\Delta(s) = \Delta(s')$ then $\Psi(s) < \Psi(s')$. 
\end{enumerate}

The algorithm of page~\pageref{phi}, that computes the function $\varphi$
can be adapted for \Adfa{}s by not considering the state  $\pi$,
initializing $i$ with $1$ and considering the order $\prec_\Psi$.
Consider the \Adfa{} of Figure~\ref{fig:adfa}. Its state ranks are
the following: $R_0=\{s_7,s_8\}$, $R_1=\{s_6\}$, $R_2=\{s_4\}$,
$R_3=\{s_1,s_5\}$, $R_4=\{s_2,s_3\}$ and $R_5=\{s_0\}$.  Regarding
the function $\Psi$ above, the function $\varphi$ is defined by the
following tuples:
$(s_7,1),(s_8,2),(s_6,3),(s_4,4),(s_5,5),(s_1,6),(s_3,7),(s_2,8), (s_0,9)$.
 And, its  string representation is
 {\small$$\scriptsize{[[0,0,0,0][0,0,0,1][0,0,0,1][1,1,2,0] 
[3,1,1,0][0,1,0,0][4,4,0,0][5,0,0,0][5,0,0,0][6,7,8,0]]};$$}
\noindent which is lexicographically ordered within a rank (i.e., respects
condition~\ref{eq:N6}, considering $\leq$ instead of $<$).

As we aim to obtain an exact generator that will
proceed by increasing  rank order, it is convenient that $\Psi(s)$ 
is related to a maximal word of ${\cal L}_L(s)$. To assure that 
in a rank the state representations are lexicographically ordered we
also take into consideration the ranks and the finalities of the states.

Let ${\cal A}=(S\cup \{\Omega\},\Sigma,\delta,s_0,F)$ be an \Adfa,
with $\Sigma$ ordered. For each state $s\in S$, let $\delta^{-1}(s) =
\{(s',\sigma)\mid \delta(s',\sigma)=s\}$, and for $(s',\sigma) \in
\delta^{-1}(s)$ let consider the tuple $\tau=(rk(s'),\sigma,f_{s'})$ with
$f_{s'}=1$, if $s'\in F$ or $0$, otherwise.  We define ${\cal L}_L^{rk}(s)$
to be the set of sequences of these tuples
$\tau_{0}\ldots\tau_{l}$ such that $\sigma_{l}\cdots\sigma_{0}\in {\cal L}_L(s)$.  The
\textit{characteristic word} of $s$, is
 $$\Psi_c(s)=\min\{\tau_0\ldots\tau_l| \tau_0\ldots\tau_l \in {\cal L}_L^{rk}(s)\},$$
 \noindent where $\min$ is taken lexicographically.

\begin{figure}
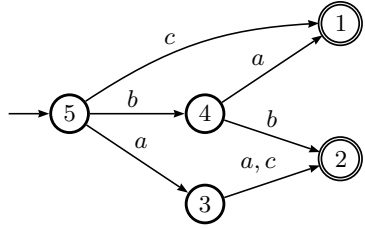

\begin{center}
 \VCDraw{%
        \TinyPicture
        \begin{VCPicture}{(0,-2)(6,3)}
          \State[5]{(0,0)}{S0}
          \Initial{S0}
          \State[3]{(3,-2)}{S1}
          \State[4]{(3,0)}{S2}
          \FinalState[2]{(6,-1)}{S3} 
          \FinalState[1]{(6,2)}{S4}
          \EdgeL{S0}{S1}{a}
          \EdgeL{S0}{S2}{b}
          \ArcL{S0}{S4}{c}
          \EdgeL{S1}{S3}{a,c}
          \EdgeL{S2}{S3}{b}
          \EdgeL{S2}{S4}{a}
        \end{VCPicture}
        }
  \end{center}
   \caption{An \Adfa{} which canonical string representation
     considering $\Psi_c$ is: $[[0,0,0,0][0,0,0,1][0,0,0,1][1,0,1,0][1,2,0,0][3,4,2,0]]$.
}
   \label{fig:adfaw}
 \end{figure}

In the example of Figure~\ref{fig:adfaw}, we have $\Psi_c(1)=1a02b0$ and
$\Psi_c(2)=1a02a0$ which shows that the numbers assigned to these states
must be reversed, i.e.,  $\varphi(1)=2$ and $\varphi(2)=1$.

The following three theorems guarantee that this representation is
indeed a canonical representation for \Adfa{}s.

\begin{theorem}
  \label{th:adfa}
  Let ${\cal A}=(S\cup \{\Omega\}, \Sigma,\delta,s_0,F)$ be an \Adfa{}
  with $rk({\cal A})=d$, $n=|S|$ and $k=|\Sigma|$. Let $(s_i)_{i\in
    [0,(k+1)(n+1)[}$, with $s_i\in [0,n[$, be the string
  representation of ${\cal A}$ obtained using the the function
  $\Psi_c$.   Then the conditions \ref{eq:N0}--\ref{eq:N5}
  of Theorem~\ref{th:mcr} are satisfied, together with the following  condition \ref{eq:N6e}:
\begin{gather*}
(\forall l\in[0,d[)(\forall i\in R_l)\, i\prec_{\Psi_c} i+1.  \tag{N6'}\label{eq:N6e}
\end{gather*}

\end{theorem}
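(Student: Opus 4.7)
The plan is to verify conditions \ref{eq:N0}--\ref{eq:N5} by re-running the arguments that establish Theorem~\ref{th:mcr} for \Madfa{}s: each of those conditions encodes a purely structural property of acyclic deterministic automata---placement of the dead and pre-dead states, the finality bit, enumeration by increasing rank, initial connectivity, rank-respecting transitions, and existence of a rank-decreasing transition---none of which depends on minimality. The new content of Theorem~\ref{th:adfa} is condition \ref{eq:N6e}, which will be checked directly from the definition of $\varphi$ after confirming that $\prec_{\Psi_c}$ is a strict total order.

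Concretely, \ref{eq:N0} holds because the adapted algorithm sets $\varphi(\Omega)=0$ and then assigns $1$ to the $\prec_{\Psi_c}$-smallest state of rank $0$, which is necessarily pre-dead and therefore has representation $(0^k,1)$. Condition \ref{eq:N1} is immediate from the choice of the last coordinate of each $\Delta$-tuple as the finality bit, and \ref{eq:N2} is built into the algorithm. For \ref{eq:N3} I would use that ${\cal A}$ is initially connected together with the numbering by increasing rank: every $i\in[1,n[$ must appear as the transition target of some strictly earlier-numbered state, so its first occurrence in the string is at position $f_i$. Condition \ref{eq:N4} is the rank-descending property of transitions, and \ref{eq:N5} is precisely Lemma~\ref{lem:rkmu}.

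Condition \ref{eq:N6e} then holds by construction, once $\prec_{\Psi_c}$ is verified to be a strict total order on the states of each rank. The main obstacle is showing that $\Psi_c$ distinguishes any two mergeable states $s\neq s'$ in the same rank (otherwise $\prec_{\Psi_c}$ fails to be antisymmetric). I would argue this injectivity from determinism: if $\Psi_c(s)=\Psi_c(s')$, both minimal tuple sequences share the same symbol components $\sigma_l,\ldots,\sigma_0$, so the word $\sigma_l\cdots\sigma_0$ lies in ${\cal L}_L(s)\cap{\cal L}_L(s')$; but in a deterministic automaton a word read from $s_0$ reaches a unique state, forcing $s=s'$. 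Combined with the lexicographic comparison on $\Delta$-tuples, this makes $\prec_{\Psi_c}$ a strict total order, and then \ref{eq:N6e} follows from the step of the adapted algorithm that numbers states within each rank in $\prec_{\Psi_c}$-order.
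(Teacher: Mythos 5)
The paper's own ``proof'' of Theorem~\ref{th:adfa} is the single sentence ``Follows from the above considerations,'' so your elaboration is strictly more informative than the original. Your overall route is the intended one: conditions \ref{eq:N0}--\ref{eq:N5} are structural properties of trim acyclic automata that do not use minimality, and the genuinely new ingredient is that $\prec_{\Psi_c}$ totally orders the states within each rank, which you correctly reduce to injectivity of $\Psi_c$; your determinism argument for that (equal tuple sequences have equal symbol components, so the corresponding word lies in ${\cal L}_L(s)\cap{\cal L}_L(s')$, forcing $s=s'$) is exactly what is needed and is the part the paper leaves entirely implicit.

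There is, however, one concrete step that fails as you state it: ``\ref{eq:N2} is built into the algorithm.'' Condition \ref{eq:N2} includes the clause $r_1=2$, which asserts that rank $0$ contains exactly one state. That is true for \Madfa{}s (unique pre-dead state) but false for \Adfa{}s in general: an \Adfa{} may have several pre-dead states, all of rank $0$. Indeed, in the paper's own example of Figure~\ref{fig:adfa} the states $s_7$ and $s_8$ are both pre-dead, receive numbers $1$ and $2$, and the first state of rank $1$ is state $3$, so $r_1=3\neq 2$. A correct proof must either restate \ref{eq:N2} with the clause $r_1=2$ dropped (keeping $r_0=1$, $r_d=n$ --- which does still hold, since the initial state is the unique state of maximal rank --- and the monotonicity of $(r_l)_l$), or explicitly record that \ref{eq:N2}, like \ref{eq:N6}, is only satisfied in a relaxed form. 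This is an imprecision inherited from the theorem statement itself, but since you walk through the conditions one by one, it is the one place where your verification is actually wrong rather than merely terse; the remaining conditions (\ref{eq:N0}, \ref{eq:N1}, \ref{eq:N3}--\ref{eq:N5}, and \ref{eq:N6e}) go through as you argue.
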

\begin{proof}
  Follows from the above considerations.
\end{proof}

\begin{theorem}\label{th:nfadfa}
  Let $(s_i)_{i\in [0,(k+1)(n+1)[}$ with $s_i\in [0,n[$ be a string
  that satisfies conditions~\ref{eq:N0}--\ref{eq:N5} and condition
  \ref{eq:N6e}, then the corresponding automaton is an \Adfa{} with
  $n$ states and an alphabet of $k$ symbols.
\end{theorem}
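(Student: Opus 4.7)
The plan is to mirror the proof of the reverse direction of Theorem~\ref{th:mcr} (given in~\cite{almeida07:_exact}), adapting it to the weaker ordering condition~\ref{eq:N6e}. Given a string satisfying the conditions, I would first explicitly construct an automaton $\mathcal{A}$: take state set $[0,n]$ with $\Omega=0$ and $s_0=n$, any $k$-symbol alphabet $\{\sigma_0,\dots,\sigma_{k-1}\}$ with a fixed total order, transitions $\delta(i,\sigma_m)=s_{(k+1)i+m}$ for $i\in[0,n]$ and $m\in[0,k[$, and finality set $F=\{i\mid s_{(k+1)i+k}=1\}$. Since the string entries lie in $[0,n[$ and $\delta$ is total, this is immediately a complete \Dfa{} with $n+1$ states and $k$ symbols.

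Next, I would verify each defining property of an \Adfa{}. Condition~\ref{eq:N0} yields $\delta(0,\sigma)=0$ for every $\sigma$, identifying $0$ with the dead state $\Omega$, and~\ref{eq:N1} guarantees that the finality entries are well-defined booleans. Using the rank partition given by $(r_l)_{l\in[0,d]}$ together with conditions~\ref{eq:N2} and~\ref{eq:N4}, I would argue by induction on $l$ that every transition from a state in $[r_l,r_{l+1}[$ lands in $[0,r_l[$, which forces strict index decrease along any non-$\Omega$ transition and therefore acyclicity of $\mathcal{A}\setminus\{\Omega\}$.

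For trimness I would treat the two requirements separately. Initial connectivity follows by downward induction from $s_0=n$: by~\ref{eq:N3} each index $i\in[1,n[$ first occurs in the representation of some state with strictly larger index, which is reachable by the induction hypothesis, hence $i$ is reachable. Usefulness follows by induction on rank: condition~\ref{eq:N5} supplies, from any state of rank $l>0$, a transition to a state of rank $l-1$ (useful by induction). For the base case $l=0$, the pre-dead state with index $1$ is final by~\ref{eq:N0}, hence useful; any other index $i\in[2,r_1[$ has representation of the form $(0^k,f)$ with $f\in\{0,1\}$ by~\ref{eq:N1}, and~\ref{eq:N6e} combined with $\Delta(1)=(0^k,1)$ forces $f=1$, so all rank-$0$ states are final and therefore useful.

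The main subtlety is precisely this last step. In an \Adfa{}, unlike a \Madfa{}, the pre-dead equivalence class may contain several indices, and no condition from~\ref{eq:N0}--\ref{eq:N5} alone forces the extra ones to be final; this is where the new condition~\ref{eq:N6e}, through its lexicographic tie-breaking on $\Delta$, does the essential work and the proof genuinely diverges from the \Madfa{} case. Once this point is settled, combining acyclicity with initial connectivity, usefulness, and the obvious size and alphabet count concludes that $\mathcal{A}$ is a trim complete \Adfa{} with $n$ states over an alphabet of $k$ symbols.
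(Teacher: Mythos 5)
Your proposal is correct and follows the route the paper intends --- reuse the reverse direction of Theorem~\ref{th:mcr} and check that relaxing \ref{eq:N6} to \ref{eq:N6e} does no harm --- but it is far more explicit than the paper's two-sentence proof, and in one place it is actually more careful than the paper. The paper asserts that conditions \ref{eq:N0}--\ref{eq:N5} alone already yield a \emph{trim} complete acyclic automaton; as you observe, once rank $0$ is allowed to contain several states (which is the whole point of the relaxation), \ref{eq:N0}--\ref{eq:N5} do not force the pre-dead states other than state $1$ to be final, and a non-final pre-dead state would be useless, breaking trimness. Your derivation of $f=1$ for those states from $\Delta(1)=(0^k,1)$ together with the lexicographic clause of $\prec_{\Psi_c}$ in \ref{eq:N6e} is exactly the missing step, so your proof genuinely patches a gap the paper glosses over. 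One small point you leave implicit and should state: condition \ref{eq:N2} as literally written fixes $r_1=2$, which would forbid a second rank-$0$ state altogether (and is violated by the paper's own example in Figure~\ref{fig:adfa}); for the theorem to be meaningful it must be read as $r_1\geq 2$, and your interval $[2,r_1[$ silently assumes this relaxation. Apart from that, your verification of completeness, acyclicity via strict index decrease (really a consequence of \ref{eq:N3} and \ref{eq:N4} together, since \ref{eq:N4} constrains the first-occurrence positions $f_i$), initial connectivity via \ref{eq:N3}, and usefulness by induction on rank via \ref{eq:N5} matches the standard argument from the \Madfa{} paper and is sound.
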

\begin{proof}
From conditions~\ref{eq:N0}--\ref{eq:N5}, we knew that we could obtain
a trim complete acyclic deterministic finite automaton. The relaxation
of condition~\ref{eq:N6} to condition~\ref{eq:N6e} allows some states
to be mergeable.
\end{proof}
\begin{theorem}\label{th:dfanf}
  Let $(s_i)_{i\in [0,(k+1)(n+1)[}$ and $(s'_i)_{i\in [0,(k+1)(n+1)[}$
  be two distinct strings satisfying
  conditions~\ref{eq:N0}--\ref{eq:N5} and condition \ref{eq:N6e}. Then
  they correspond to distinct \Adfa{}s.
  \end{theorem}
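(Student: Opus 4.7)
My plan is to argue by contrapositive: from each string satisfying \ref{eq:N0}--\ref{eq:N5} and \ref{eq:N6e} one can read off a concrete \Adfa{} (up to the renaming of the alphabet fixed by the symbol order $\Pi$), and the claim reduces to showing that the canonical procedure of Section~\ref{sec:normal}, applied to the resulting automaton, reproduces exactly the string we started with. If that holds, two distinct strings cannot both be the canonical string of a common isomorphism class, and hence the automata they define must be non-isomorphic.

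First I would fix notation: given a string $(s_i)$ satisfying the conditions, interpret state $j\in[0,n]$ as having the transitions $\delta(j,\Pi^{-1}(m))=s_{(k+1)j+m}$ for $m\in[0,k[$ and finality $s_{(k+1)j+k}$. By Theorem~\ref{th:nfadfa} this is a well-defined \Adfa{} with $n+1$ states, and by construction $\Omega$ is labelled $0$, matching $\varphi(\Omega)=0$. Then I would show inductively, rank by rank, that the labels $1,2,\ldots,n$ are precisely those produced by the adapted algorithm of page~\pageref{phi}. Conditions~\ref{eq:N2} and \ref{eq:N4} force state numbers to be strictly increasing with rank (the rank of state $j$ is computable from the already-numbered states reached by its outgoing transitions, so this is purely structural), and within a given rank, condition~\ref{eq:N6e} together with the definition of $\prec_{\Psi_c}$ forces the enumeration to coincide with the one produced when states are sorted first lexicographically on $\Delta(\cdot)$ and then by their characteristic word $\Psi_c(\cdot)$.

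The step that requires the most care is justifying the last claim: that when two states in the same rank have equal $\Delta$-tuples, $\Psi_c$ still refers to a well-defined quantity that depends only on the isomorphism type of the automaton. Here the key observation is that $\Psi_c(s)$ is built from triples $(rk(s'),\sigma,f_{s'})$ taken over all preimages $(s',\sigma)\in\delta^{-1}(s)$, and every ingredient (rank, alphabet index via $\Pi$, finality) is an isomorphism invariant; moreover $\Psi_c$ is computed bottom-up along $\mathcal{L}_L(s)$, so by the induction on rank the tuples it uses involve only states whose labels we have already proven to agree with $\varphi$. Because the lemma preceding the definition of $\prec_\Psi$ gives ${\cal L}_L(s)\cap{\cal L}_L(s')=\emptyset$ for $s\neq s'$, $\Psi_c$ is injective and hence $\prec_{\Psi_c}$ is a strict total order, so \ref{eq:N6e} pins down a unique enumeration. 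Combining these observations, the canonical string of the automaton defined by $(s_i)$ is $(s_i)$ itself; the same holds for $(s'_i)$; thus if the two automata were isomorphic the strings would coincide, contradicting distinctness.
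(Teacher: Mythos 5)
Your proposal is correct and takes essentially the same route as the paper: the paper's proof is a one-line appeal to the canonicity argument of Theorem~5 of Almeida \textit{et al.} for \Madfa{}s, adding only that condition~\ref{eq:N6e} is what makes it carry over, and your reconstruction --- each valid string is the canonical string of the automaton it defines, because $\Psi_c$ is injective (disjoint left languages) and label-independent, so $\prec_{\Psi_c}$ is a strict total order that condition~\ref{eq:N6e}, together with the rank conditions, forces the numbering to respect --- is exactly that argument spelled out. One minor slip worth noting: the predecessors of a state have strictly \emph{higher} rank, so the tuples entering $\Psi_c(s)$ come from states not yet numbered in the increasing-rank induction; this is harmless because, as you also observe, those tuples record only ranks, symbols and finalities and never the labels $\varphi(s')$.
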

  \begin{proof}
    The proof follows exactly the lines of~Theorem 5 in~Almeida \textit{et
      al.}\cite{almeida07:_exact}, because of condition~\ref{eq:N6e}.
  \end{proof}

\section{Exact generation of \Adfa{}s}
\label{sec:exact}

To generate all the string representations of the \Adfa{}s with $n$
states and $k$ symbols, we will use the same approach described by
Almeida \textit{et al} \cite{almeida07:_exact}, traversing the search
tree, backtracking on its way, to generate all possible
representations. The representations will appear lexicographically. The
conditions to generation are the same but with \ref{eq:N6} replaced by
its relaxed form \ref{eq:N6e}. The satisfaction of the conditions on
the order of equivalent states is too complex to be included in the
generation. When a pair of equivalent states is generated, instead of
renumbering them according to the first word (for some order) that
reaches each state, we proceed with the generation of all the states
in lexicographical order of their $\Delta$ values, and discard the
automata for which the previously stated order is contradicted.

The problem with this strategy is that, with the ``natural''
lexicographical order, the contradiction to the order of two states in
rank $0$ may appear only when generating the last state, i.e., the
\textit{initial state} of the automaton. This is very inconvenient,
because a lot of generating work is going to be discarded and because
of the backtracking strategy, the corresponding search tree is not
pruned as it should. On the other hand, using the order described in
Section~\ref{sec:normal} we can evaluate the possible contradictions
after the complete generation of each rank of states.

The algorithm goes as follows:
\begin{itemize}
\item At the beginning of the generation of each rank, there are two
  data structures:
  \begin{description}
  \item[\texttt{ProbL}] a set of lists of states that are equivalent
    and for which we want to ensure that the characteristic words that reach
    them are in accordance with that order;
  \item[\texttt{Refs}] an empty set of lists of states that, in that
    rank, have transitions to states in some list in \texttt{ProbL}.
  \end{description}
\item Every time two or more states with the same $\Delta$ are generated,
  they are added as a new list to \texttt{ProbL}. 
  \begin{multline*}
      (\Delta(s_1)=\Delta(s_2)=\cdots=\Delta(s_l))\, \wedge\,
      (\varphi(s_1)<\varphi(s_2)< \cdots<\varphi(s_l)) \,\Rightarrow\,\\
      \Rightarrow\, \text{\texttt{ProbL}}\leftarrow\text{\texttt{ProbL}} \cup \{[s_1,s_2,\ldots,s_l]\}.   
    \end{multline*}
  \item Every time a newly generated state has a transition to a state
    present in a list of \texttt{ProbL}, it is added to \texttt{Refs}
    with information about the 
state it has a transition to.
  \item When the state generation of a given rank is finished (because
    no more states in that rank can be generated according to rules
    \ref{eq:N0}--\ref{eq:N6e}), each list $R$  in \texttt{Refs} of states
    with transitions to states in a list $L$ in \texttt{ProbL} is examined.
    \begin{itemize}
  \item  For $x\in L$, let  $m(x)=\min\{(\sigma,f_s)\,\mid\,(s,\sigma)\in\delta^{-1}(x)\,\wedge\,s\in
    R\}$, where $f_s$ represents the finality of $s$. 
    If for some pair of states of $L$ a contradiction is found, 
    i.e.,
    \begin{multline*}
      (\exists x_1,x_2\in L)(\exists s_1,s_2\in
      R)(\exists\sigma_1,\sigma_2\in \Sigma)\\
      (\delta(s_1,\sigma_1)=x_1\,\wedge\,\delta(s_2,\sigma_2)=x_2\;\wedge\;m(x_1)<m(x_2)\;
      \wedge\; \varphi(x_1)> \varphi(x_2));
    \end{multline*}
    then the generation of this automaton is aborted and the process
    is continued by backtracking.
  \item For all the non-singleton sublists $M_{(\sigma,f)}$ of states in $L$ such
    that
    $$(\forall x \in M_{(\sigma,f)}) m(x)=(\sigma,f);$$ 
    its elements are removed from $L$, and the list of the states $s$ of
    $R$ such that $(\delta(s,\sigma)\in M_{(\sigma,f)}\,\wedge f_s=f)$,
    with the order induced by $L$, is added to \texttt{ProbL}.
  \item Finally, if 
  \begin{multline*}
      (\forall x_1,x_2\in L)(\exists s_1,s_2\in
      R)(\exists\sigma_1,\sigma_2\in\Sigma)\\
      ((\delta(s_1,\sigma_1)=x_1\,\wedge\, \delta(s_2,\sigma_2)=x_2)\,\Rightarrow\,
      (m(x_1)<m(x_2)\,\Rightarrow\varphi(x_1)<\varphi(x_2)));
    \end{multline*}
    then all the states in $L$ that are the image of a transition from a
    state in $R$ are removed from $L$, and $R$ is removed from \texttt{Refs}.
  \item All empty or singleton lists are removed from \texttt{ProbL}.
  \item Before the generation of a new rank is started, \texttt{Refs}
    is emptied.
\end{itemize}
\end{itemize}

The correctness of this algorithm follows from the considerations in
Section~\ref{sec:normal}.

\subsection{Some experimental results}
\label{sec:result}

In Table~\ref{tab:numbers} the number of \Madfa{}s and \Adfa{}s for
some small values of $n$ and $k$ are summarized. We observe that
almost all \Adfa{}s are \Madfa{}s. Several performance times are also
presented. For the enumeration of \Adfa{}s and \Madfa{}s instead of
the exact generators, we also generate initially-connected
deterministic automata (\Icdfa{}s), using the method presented in
Almeida \textit{et
  al.}~\cite{almeida07:_enumer_gener_strin_autom_repres}, and then
test for acyclicity, trimness and possibly, for minimality. But the
number of \Idfa{}s grows much faster then the number of \Adfa{}s (or
\Madfa{}s), so the generate-test-reject method is not feasible.  In
column \textbf{Time B} of Table~\ref{tab:numbers} we present the
running times obtained by this method (for small values of $n$ and
$k$).  In column \textbf{Time A} of Table~\ref{tab:numbers} we present
the running times obtained by the exact generation methods.

\begin{table}
  \centering
 {\small   \begin{tabular}{c}
 \begin{tabular}[h]{|c||r|r|r|c||r|r|r|}
    \hline
&\multicolumn{7}{|c|}{$k=2$}\\\hline
    \textbf{$n$} &\multicolumn{1}{|c|}{\textbf{MADFA}} & \multicolumn{1}{|c|}{\textbf{Time A (s)}} &
\multicolumn{1}{|c|}{\textbf{Time B (s)}} & \multicolumn{1}{|c||}{\textbf{A/B}} &\multicolumn{1}{|c|}{\textbf{ADFA}} &
\multicolumn{1}{|c|}{\textbf{Time A (s)}} &\multicolumn{1}{|c|}{\textbf{Time B (s)}}
    \\
    \hline
    3 & 60 & 0 & 0 & 1 & 62 & 0 & 0 \\
    \hline
    4 & 900 & 0 & 0 & 1 & 964 & 0.14 & 0 \\
    \hline
    5 & 18480 & 0.1 & 0.1 & 1 & 20424 & 3.65 & 0.2 \\
    \hline
    6 & 487560 & 3 & 16 & 0.18 & 553472 & 110.92 & 13.92 \\
    \hline
    7 & 15824880 & 116 & 687 & 0.16 & 18384552 &  & 736.89 \\
    \hline
    8 & 612504240 & 4742 & 35774 & 0.13 & 726133776 &  & 20284.14 \\
    \hline
    9 & 27619664640 & 224243 &2345124 & 0.09 &  &  &  \\
    \hline
  \end{tabular}\\
 \begin{tabular}[h]{|c||r|r|r|c||r|r|r|}
    \hline\hline
&\multicolumn{7}{|c|}{$k=3$}\\\hline
    \textbf{$n$} &\multicolumn{1}{|c|}{\textbf{MADFA}} & \multicolumn{1}{|c|}{\textbf{Time A (s)}} &
\multicolumn{1}{|c|}{\textbf{Time B (s)}} & \multicolumn{1}{|c||}{\textbf{A/B}} &\multicolumn{1}{|c|}{\textbf{ADFA}} &
\multicolumn{1}{|c|}{\textbf{Time A (s)}} &\multicolumn{1}{|c|}{\textbf{Time B (s)}}\\

    \hline
    2 & 14 & 0 & 0 & 1 & 14 & 0 & 0 \\
    \hline
    3 & 532 & 0.1 & 0.1 & 1 & 544 & 0.07 & 0 \\
    
    \hline
    4 & \ \ \ \ \ \ \ \ 42644 & 0.22 & 0.58 & 0.38 &\  \ \ \ \ \
    44290 & 7.59 & 6.16 \\

\hline
    5 & \ \  \ 6011320 & 43 & 3340 & 0.01 &\  \ \ \ 6306672 &  & 3142
 \\
    \hline 
   6 & \ 1330452032 & 11501 & 2431307 &0.005  &\  \ \ &  & 
 \\
    \hline
  \end{tabular}\\
\begin{tabular}[h]{|c||r|r|r|c||r|r|r|}
    \hline\hline
&\multicolumn{7}{|c|}{$k=4$}\\\hline
    \textbf{$n$} &\multicolumn{1}{|c|}{\textbf{MADFA}} & \multicolumn{1}{|c|}{\textbf{Time A (s)}} &
\multicolumn{1}{|c|}{\textbf{Time B (s)}} & \multicolumn{1}{|c||}{\textbf{A/B}} &\multicolumn{1}{|c|}{\textbf{ADFA}} &
\multicolumn{1}{|c|}{\textbf{Time A (s)}} &\multicolumn{1}{|c|}{\textbf{Time B (s)}}\\
    \hline
    2 & 30 & 0 & 0 & 1 & 30 & 0 & 0 \\
    \hline
    3 & 3900 &0.2 & 1.6  & 0.13 & 3950 & 0.51 &1.55  \\
    \hline
    4 &  \ \ \ \ \ 1460700 & 7.7 & 5549 & 0.001 &\  \ \ \ 1488120 &
    236& 5326  \\\hline
  \end{tabular}
    \\  
\begin{tabular}[h]{|c||r|r|r|c||r|r|r|}
    \hline\hline
&\multicolumn{7}{|c|}{$k=5$}\\\hline
    \textbf{$n$} &\multicolumn{1}{|c|}{\textbf{MADFA}} & \multicolumn{1}{|c|}{\textbf{Time A (s)}} &
\multicolumn{1}{|c|}{\textbf{Time B (s)}} & \multicolumn{1}{|c||}{\textbf{A/B}} &\multicolumn{1}{|c|}{\textbf{ADFA}} &
\multicolumn{1}{|c|}{\textbf{Time A (s)}} &\multicolumn{1}{|c|}{\textbf{Time B (s)}}\\
    \hline
    2 & 62 & 0 & 0 & 1 & 62 & 0 & 0 \\
    \hline
    3 & 26164 &0.121 & 2.396  &0.05  & 26344 & 3.4 & 116  \\
    \hline
    4 &  \ \ \ 43023908 &   213& & &\  \ \ 43411218 &6805& 4872111  \\
    \hline
  \end{tabular}\\\\
    \end{tabular}
  }
\caption{Number of \Madfa{}s and \Adfa{}s for small values of $n$ and
  $k$. Performance times for its generation: exact (\textbf{A}) and
  with a test-rejection pass (\texttt{B}).}
\label{tab:numbers}
\end{table}

Considering only the performance times for \Madfa{}s and $k=2$, we
obtained  a curve fitting for both methods: for the exact generation
a function $f(n)= e^{3.66n-20.76}$ and for the test-reject a
function $g(n)=  e^{4.21n-23.0}$, which gives $g(n)/f(n)=e^{0.55n-2.24}$.

As for the performance values we should only consider their order of
magnitude as they were obtained using different CPUs and programs
implemented in different programming languages. Both performance times
\textbf{B}, were obtained using a \texttt{C} implementation and
running on a AMD Athlon 64 at 2.5GHz.  Performance times \textbf{A}
were obtainned using a \texttt{C++} implementation and running on a
Intel\textsuperscript{\textregistered}
Xeon\textsuperscript{\textregistered} 5140 at 2.33GHz, and a
\texttt{Python} implementation running on a AMD Athlon 64 at 2.5GHz,
respectively for \Madfa{}s and for \Adfa{}s (in general the
\texttt{C++} implementation for \Madfa{} is two times faster than the
correspondent \texttt{Python} implementation).
 
It is reasonable that for (very) small
values of $n$ the test-reject method is faster, as the pruning
of non legal \Adfa{}s is a relatively costly operation. But because of the much faster
growing of the number of \Icdfa{}s (when compared with the  number of
\Adfa{}s), that will not happen for larger $n$s.
\vspace{-0.5cm}
\section{Counting \Adfa{}s for $n$ and $k$}
\label{sec:count}
Let $A_k(n)$ be the number of \Adfa{}s with $n$ states over an
alphabet of $k$ symbols and let $M_k(n)$ be the corresponding number of
\Madfa{}s. In Almeida \textit{et al.}~\cite{almeida07:_exact}, the
values of $M_k(n)$ were determined for $n\in [1,5]$. The same kind of
results can be obtained for $A_k(n)$. The values of $A_k(n)$ for
small values of $n$ can be determined by considering the possible
distribution of states by ranks and the number of \textit{dangling}
states that are targets of transitions from a state of a previous rank,
for the first time.  Using the Principle of Inclusion and Exclusion we
have:
\begin{align*}
A_k(2)=&M_k(2)=2(2^k-1).\\
A_k(3) =& M_k(3)+ (3^k-2^{k+1}+1)= 2^2(3^k-2^k)(2^k-1)+ (3^k-2^{k+1}+1).\\
 A_k(4) =& 2^3(4^k-3^k)(3^k-2^k)(2^k-1) + 2^2(4^k-3^k2+2^k)(2^k-1)^2
 \\
+& 2(4^k-3^k)(3^k-2^k2+1)+(4^k- 3^k3+2^k3-1)/3.
\end{align*}
For $n=5$ there are already 12 configurations to be considered.
For values of $n\in[2,5]$, $\lim_{k\rightarrow
  \infty}M_k(n)/A_k(n)=1$. We note that this behaviour is also
observed (experimentally) in the case of arbitrary \Icdfa's.

\section{Conclusions}
\label{sec:con}
A canonical representation for minimal acyclic deterministic finite
automata was extended to allow equivalent states, and thus uniquely
represent trim acyclic deterministic finite automata.  A method for
the exact generation of \Madfa{}s was extended to allow the generation
of equivalent states, while still avoiding the multiple generation of
non-isomorphic automata. More experimental tests must be carried on in
order to see what really is the overhead of pruning non-legal
equivalent states.
 \section{Acknowledgements}
 \label{sec:ack}
We thank the anonymous referees for their
comments that helped to improve this paper.


\begin{thebibliography}{10}

\bibitem{almeida07:_enumer_gener_strin_autom_repres}
M.~Almeida, N.~Moreira, and R.~Reis.
\newblock Enumeration and generation with a string automata representation.
\newblock {\em Theoretical Computer Science}, 387(2):93--102, 2007.

\bibitem{almeida07:_exact}
M.~Almeida, N.~Moreira, and R.~Reis.
\newblock Exact generation of minimal acyclic deterministic finite automata.
\newblock In {\em Workshop on Descriptional Complexity of Formal Systems
  (DCFS07)}, pages 57--68, High Tatras, Slovakia, 20-22/07 2007.

\bibitem{bassino07:_theor_comput_scien}
F.~Bassino and C.~Nicaud.
\newblock Enumeration and random generation of accessible automata.
\newblock {\em Theoretical Computer Science}, 381(1-3):86--104, 2007.

\bibitem{callan07:_stirl_cycle_number_count_unlal}
D.~Callan.
\newblock A determinant of {S}tirling cycle numbers counts unlabeled acyclic
  single-source automata.
\newblock Department of Statistics, University of Wisconsin-Madison, 2007.

\bibitem{champarnaud:_random_gener_dfas}
J.-M. Champarnaud and T.~Parantho\"en.
\newblock Random generation of {DFA}s.
\newblock {\em Theoretical Computer Science}, 330(2):221--235, 2005.

\bibitem{domaratzki06:_enumer_formal_languag}
M.~Domaratzki.
\newblock Enumeration of formal languages.
\newblock {\em Bull. EATCS}, 89(113-133), June 2006.
\newblock 2006.

\bibitem{domaratzki02}
M.~Domaratzki, D.~Kisman, and J.~Shallit.
\newblock On the number of distinct languages accepted by finite automata with
  n states.
\newblock {\em J. of Automata, Languages and Combinatorics}, 7(4):469--486,
  2002.

\bibitem{hopcroft00_c:_introd_autom_theor_languag_comput}
J.~Hopcroft, R.~Motwani, and J.~D. Ullman.
\newblock {\em Introduction to Automata Theory, Languages and Computation}.
\newblock Addison Wesley, 2000.

\bibitem{liskovets69}
V.~A. Liskovets.
\newblock The number of initially connected automata.
\newblock {\em Kibernetika}, 3:16--19, 1969.
\newblock (in Russian; Engl. transl: {\em Cybernetics}, 4 (1969), 259-262).

\bibitem{liskovets06:_exact}
V.~A. Liskovets.
\newblock Exact enumeration of acyclic deterministic automata.
\newblock {\em Discrete Applied Mathematics}, 154(3):537--551, March 2006.

\bibitem{lothaire05:_algor_words}
M.~Lothaire.
\newblock Algorithms on words.
\newblock In {\em Applied Combinatorics on Words}, chapter~1. Cambridge
  University Press, 2005.

\bibitem{nicaud00}
C.~Nicaud.
\newblock {\em Étude du comportement en moyenne des automates finis et des
  langages rationnels}.
\newblock PhD thesis, Université de Paris 7, 2000.

\bibitem{revuz92:_minim_of_acycl_deter_autom}
D.~Revuz.
\newblock Minimisation of acyclic deterministic automata in linear time.
\newblock {\em Theor. Comp. Sci.}, 92(1):181--189, 1992.

\bibitem{robinson85:_count}
R.~W. Robinson.
\newblock Counting strongly connected finite automata.
\newblock In {\em Graph Theory with Applications to Algorithms and Computer
  Science}, pages 671--685. Wiley, 1985.

\bibitem{yu97:_handb_formal_languag}
S.~Yu.
\newblock Regular languages.
\newblock In G.~Rozenberg and A.~Salomaa, editors, {\em Handbook of {F}ormal
  {L}anguages}, volume~1. Springer-Verlag, 1997.

\end{thebibliography}
\end{document}